\newtheorem{thm}{Theorem}
\newtheorem{prop}[thm]{Proposition}
\newtheorem{dfn}[thm]{Definition}
\newcommand{\id}{\text{id}}
\newcommand{\op}[1]{{#1}^\mathrm{op}}
\newcommand*{\End}[1][]{\mathop{\text{End}_{#1}}}
\newcommand{\LMod}[1]{\text{Mod}_{#1}}
\newcommand{\RMod}[1]{{_{#1}\text{Mod}}}
\newcommand{\BiMod}[2]{{_{#1}\text{Mod}_{#2}}}
\DeclareMathOperator{\Hom}{Hom}
\DeclareMathOperator{\powSet}{\mathcal{P}}
\begin{document}
\title{Morita Rigidity for Kleene Algebras\thanks{The author gratefully acknowledges the support of Cornell University and Dexter Kozen.}}
\titlerunning{Morita Rigidity}
%
\author{Luke Serafin\inst{1}\orcidID{0009-0004-5189-1429}}
\authorrunning{L. Serafin}
%
\institute{Cornell University, Ithaca NY 14850, USA \\
\email{lss255@cornell.edu}}
\maketitle              
\begin{abstract}
We introduce Morita equivalence to the study of Kleene algebras and modules.
Classical characterizations of Morita-equivalent semirings such as having equivalent categories of modules
and one semiring being a full matrix algebra over the other carry over.
We also observe that Morita equivalence can be applied to extending and restricting scalars in Lindenbaum--Tarski algebras of propositional dynamic logics.
But the signature result which we obtain is a form of rigidity for Kleene algebras, which states that if the semiring reducts of two Kleene algebras are Morita-equivalent, then the Morita equivalence is in fact
witnessed by Kleene bimodules.

\keywords{Kleene algebra \and Kleene module \and Morita equivalence.}
\end{abstract}
\section{Introduction}

Modules and the notion of Morita equivalence are foundational to the classical study of rings~\cite{lam-lectures},
and we assume the reader has some familiarity with the classical theory of modules though not necessarily of Morita equivalence.

The theory of modules has previously been extended to the context of semirings (for example see~\cite{golan-semirings}), though we spend some time developing the theory of tensor products since these do not appear to have been
previously worked out for Kleene modules, and the standard definition in~\cite{golan-semirings} is inadequate for us because it makes all tensor products of modules over additively idempotent semirings (such as Kleene algebras) trivial.
The alternative tensor product defined in~\cite{katsov-injective-envelopes} does have the properties we need, though and independently-developed account of tensor products is preserved here.

Additionally, a variety of instances of Kleene modules are presented, the most important being endomorphism modules of Kleene modules (including $n \times n$ matrices) and the Lindenbaum--Tarski algebra of propositional dynamic logic.
Morita equivalence for rings is obtained by replacing the morphisms of the category of rings by bimodules, and considering the induced notion of isomorphism.
More specifically, the morphisms from a ring $R$ to a ring $S$ are the $(R, S)$-bimodules, and composition of bimodules is given by tensor product.
The author has undertaken to study Morita equivalence in the context of Kleene algebras, leading to the discovery of a number of results and directions for further study.
Several standard results carry over, such as characterizations of Morita equivalence in terms of categorical equivalence of module categories, and in terms of matrix algebras defined in terms of full (multiplicative) idempotents.
One surprising fact is that if the semiring reducts of two Kleene algebras $K$ and $S$ are Morita-equivalent, then in fact $K$ and $S$ are Morita-equivalent as Kleene algebras (using Kleene bimodules to witness the equivalence).
This is a rigidity theorem, saying that a weaker equivalence (Morita equivalence in the category of semirings) implies a stronger equivalence (Morita equivalence in the category of Kleene algebras) within a particular domain of
structures (here, Kleene algebras).


\section{Semiring Preliminaries}
We review here the basics of semirings and the corresponding generalization of module (sometimes called a semimodule in the literature).
For more concerning semirings see for example~\cite{golan-semirings} and~\cite{hebisch-weinert-semirings-cs}.
In~\cite{golan-semirings} the author develops a theory of tensor products of semirings, but these tensor products are always cancellative, which means trivial in the context of additively idempotent semirings with which we are
concerned.
Therefore we develop an alternative, and in the author's opinion more natural, tensor product which is useful even in the case of additively idempotent semirings.
Though this tensor product appears to be new, a form of tensor product for Kleene algebras is developed in~\cite{hopkins-leiss}, and the book~\cite{eklund-etal-semigroups-complete-lattices} considers tensor products for a variety of
idempotent structures related to Kleene algebras.

\begin{dfn}
A (unital) \emph{semiring} is an algebraic structure $\langle S, +, \cdot, 0, 1 \rangle$ such that
\begin{enumerate}
\item $\langle S, +, 0 \rangle$ is a commutative monoid,
\item $\langle S, \cdot, 1 \rangle$ is a monoid (not necessarily commutative),
\item $s \cdot 0 = 0 \cdot s = 0$ for every $s \in S$, and
\item multiplication distributes over addition, meaning that for any $x, y, z \in S$, $x \cdot (y + z) = (x \cdot y) + (x \cdot z)$ and $(x + y) \cdot z = (x \cdot z) + (y \cdot z)$.
\end{enumerate}
We often abuse notation and denote the semiring $\langle S, +, \cdot, 0, 1 \rangle$ by $S$, the operations being understood.
Also, we generally denote multiplication $\cdot$ by juxtaposition, $xy = x \cdot y$, and regard multiplication as taking precedence over addition in expressions, so $xy + z$ is parsed as $(x \cdot y) + z$.
When multiple semirings are under consideration we may add subscripts to operations, e.g. $\alpha(r +_S s) \cdot_T t$.
\end{dfn}

As is common in ring theory, we sometimes need multiplication in a semiring to be computed in the opposite order, and so define the operation on semirings which reverses the order of multiplication.

\begin{dfn}
For $S$ a semiring, the \emph{opposite} semiring $\op{S}$ has the same underlying set, addition, and constants as $S$, but the multiplication $\cdot_{\op{S}}$ is defined by $s \cdot_{\op{S}} t = t \cdot_S s$ for $s, t \in S$.
\end{dfn}

Note that $(\op{S})^\mathrm{op} = S$ for any semiring $S$.
We next give the definition of a module over a semiring (also called a semimodule in the literature; e.g.~\cite{golan-semirings}).

\begin{dfn}
Let $S$ be a semiring.
A \emph{left semiring module 
over $S$} (or simply a \emph{left $S$-module}, where it is understood that $S$ is a semiring)
is a commutative monoid $\langle M, +, 0 \rangle$ together with a map $S \times M \rightarrow M$, denoted by juxtaposition---so $(s,m) \mapsto sm$ for $s \in S$ and $m \in M$---satisfying the following
for any $s, t \in S$ and $m, n \in M$:
\begin{enumerate}
\item $s(m + n) = sm + sn$
\item $(s + t)m = sm + tm$
\item $s(tm) = (st)m$
\item $1m = m$
\item $0m = 0$.
\end{enumerate}
Right $S$-modules are defined symmetrically, with the operation now a map $M \times S \rightarrow M$, written on the right: $(m, s) \mapsto ms$.

For $S$ and $T$ semirings, an \emph{$(S,T)$-bimodule} is a commutative monoid $\langle M, +, 0 \rangle$ equipped with both a left $S$-module map and a right $T$-module map, such that these two structures are compatible in the sense
that for any $s \in S$, $t \in T$, and $m \in M$, $s(mt) = (sm)t$.
As in the case of semirings, when we need to disambiguate in which module an operation is happening, we use subscripts, as in $s \cdot_N \alpha(m +_M n \cdot_M t)$ where $M$ is a right semiring module and $N$ a left semiring module
over a semiring $S$.
As in the literature on modules over rings, we sometimes use subscripts to the left or right of a symbol denoting a module over a semiring to indicate both which semiring acts on the module and from which side.
Thus if $S$, $T$ are semirings, ${_S}M$ denotes a left module over $S$, $N_T$ denotes a right module over $T$, and ${_S}B_T$ indicates an $(S, T)$-bimodule.
This notation is particularly helpful when working with direct sums and tensor products, as we shall later see.
\end{dfn}

To define tensor products we shall need the notion of a congruence relation.
This is a general notion from universal algebra (see for example~\cite{burris-sankappanavar}).
We view modules as algebraic structures with a potentially infinite number of unary operations of scalar multiplication, one for each scalar.

\begin{dfn}
A \emph{congruence} on an algebraic structure $\mathfrak{A}$ is an equivalence relation on the underlying set of $\mathfrak{A}$ which preserves all operations.
For example, if $\mathfrak{A}$ is a semiring and $\equiv$ is a congruence on $\mathfrak{A}$, then for $a, b, a', b'$ elements of the underlying set of $\mathfrak{A}$,
if $a \equiv a'$ and $b \equiv b'$, then $a + b \equiv a' + b'$ and $ab \equiv a'b'$.
\end{dfn}

It is a well-known fact from universal algebra (see~\cite{burris-sankappanavar} for a proof) that if $\equiv$ is a congruence on an algebraic structure $\mathfrak{A}$,
then the set of equivalence classes of $\equiv$ carries a structure of the same type as $\mathfrak{A}$, with operations computed by taking the equivalence class of the corresponding operation in $\mathfrak{A}$ performed on
representatives.
This is the \emph{quotient} of the algebra $\mathfrak{A}$ by the congruence $\equiv$.
Thus for example if $\equiv$ is a congruence on a semiring $S$ and we use brackets to denote $\equiv$-equivalence classes, then in the quotient of $S$ by $\equiv$ we have for $s, t \in S$ that
$[s] + [t] = [s + t]$ and $[s][t] = [st]$.
It is easy to use the stipulation that $\equiv$ is a congruence to prove that these operations are well-defined.

As in the case of semirings, we can define the opposite of a semiring module.

\begin{dfn}
For $S$ a semiring and $M$ a left semiring module over $S$, the \emph{opposite} module $\op{M}$ to $M$ is a right $\op{S}$-module with the action of $\op{S}$ given by $m \cdot_{\op{M}} s = s \cdot_M m$ for $s \in S$ and $m \in M$.
The opposite of a right $S$-module and the opposite of an $(S, T)$-bimodule for $T$ also a semiring are defined similarly, with the opposite of a right $S$-module being a left $\op{S}$-module and the opposite of an $(S, T)$-bimodule
being a $(\op{T}, \op{S})$-bimodule.
\end{dfn}

\begin{prop}
The opposite of a module over a semiring $S$ is a module of the appropriate type.
\end{prop}

\begin{proof}
Let $S$ and $T$ be semirings; we treat just the case of an $(S, T)$-bimodule, since the other cases are straightforward adaptations of the argument.
Fix an $(S, T)$-bimodule $M$ and note that the actions of $\op{T}$ and of $\op{S}$ on $\op{M}$ are given respectively by
\[ t \cdot_{\op{M}} m = m \cdot_M t \quad \text{and} \quad m \cdot_{\op{M}} s = s \cdot_M m \]
for $s \in S$, $t \in T$, and $m \in M$.
It is immediate that these actions are additive, preserve zero, and have the property that $1_{\op{S}}$ and $1_{\op{T}}$ act as identities.
It remains to verify that for $s, s' \in S$, $t, t' \in T$, and $m \in M$, we have the following:
\begin{equation}
\label{left-act} t \cdot_{\op{M}} (t' \cdot_{\op{M}} m) = (t \cdot_{\op{M}} t') \cdot_{\op{M}} m,
\end{equation}
\begin{equation} 
\label{right-act} (m \cdot_{\op{M}} s) \cdot_{\op{M}} s') = m \cdot_{\op{M}} (s \cdot_{\op{M}} s'), \quad \text{and}
\end{equation}
\begin{equation}
\label{act-commute} (t \cdot_{\op{M}} m) \cdot_{\op{M}} s = t \cdot_{\op{M}} (m \cdot_{\op{M}} s).
\end{equation}
For \eqref{left-act} we compute that
\begin{align*}
 t \cdot_{\op{M}} (t' \cdot_{\op{M}} m) &= t \cdot_{\op{M}} (m \cdot_M t') = (m \cdot_M t') \cdot_M t = m \cdot_M (t' \cdot_T t) = \\
 &= m \cdot_M (t \cdot_{\op{T}} t') = (t \cdot_{\op{T}} t') \cdot_{\op{M}} m.
\end{align*}
The case of \eqref{right-act} is symmetric.
For \eqref{act-commute} note that
\[ (t \cdot_{\op{M}} m) \cdot_{\op{M}} s = (mt) \cdot_{\op{M}} s = s(mt) = (sm)t = t \cdot_{\op{M}} (sm) = t \cdot_{\op{M}} (m \cdot_{\op{M}} s). \]
\end{proof}

\begin{prop}
For $S$ a semiring, $\op{S}$ has the structure of an $(S, S)$-bimodule (and hence also of either a left or a right semiring module over $S$).
\end{prop}

\begin{proof}
To avoid confusion, we shall denote multiplication in $S$ by $\cdot_S$, multiplication in $\op{S}$ by $\cdot_{\op{S}}$, the left action of $S$ on $\op{S}$ by $\cdot_L$, and the right action of $S$ on $\op{S}$ by $\cdot_R$.
Given $a, b\in S$, define the left and right actions of $S$ by
\[ a \cdot_L b = a \cdot_R b = b \cdot_{\op{S}} a. \]
Because the two actions are given by the same operation, the fact that they commute follows immediately from associativity of multiplication in $\op{S}$.
We check that multiplication in $S$ distributes over the right action of $S$; the case of the left action is symmetric.
For $a, b, c \in S$, compute
\[ a \cdot_R (b \cdot_S c) = (b \cdot_S c) \cdot_{\op{S}} a = (c \cdot_{\op{S}} b) \cdot_{\op{S}} a = c \cdot_{\op{S}} (b \cdot_{\op{S}} a) = c \cdot_R (b \cdot_R a). \]
\end{proof}

\section{Kleene Algebras and Kleene Modules}

A semiring $S$ is \emph{additively idempotent} if and only if for every $x \in S$, $x + x = x$.
Additively idempotent semirings (and in fact idempotent commutative monoids more generally)\footnote{Commutative idempotent monoids are easily seen to be definitionally equivalent to join-semilattices}
come equipped with a partial order defined by $x \le y$ if and only if $x + y = y$.
It is easy to check that $x + y$ is then the least upper bound of $x$ and $y$ in the order.

\begin{dfn}
A \emph{Kleene algebra} $\langle K, +, \cdot, 0, 1, * \rangle$ is an additively idempotent semiring with an additional unary operation $*$ that satisfies the following axioms for any $x, y, z \in K$:
\begin{itemize}
\item $1 + xx^* = 1 + x^* x = x^*$,
\item if $x + yz \le z$ then $y^* x \le z$,
\item if $x + zy \le z$ then $xy^* \le z$.
\end{itemize}
\end{dfn}

Note that, since semirings are defined to be unital, if a semiring $S$ is additively idempotent then any left or right module over $S$ is also additively idempotent.
The argument for a left module $M$ over $S$ will be sufficient, because the case of a right module is symmetric.
For $m \in M$,
\[ m + m = 1m + 1m = (1 + 1)m = 1m = m. \]
Also note that by the symmetry of the conditions on the asterate, the opposite semiring of a Kleene algebra becomes a Kleene algebra when endowed with the same asterate.

\begin{dfn}
Fix a semiring $S$, and let $M$ and $N$ be left $S$-modules.
A map $\phi : M \rightarrow N$ is a \emph{homomorphism} of left $S$-modules if and only if for every $s \in S$ and $m, n \in M$,
\[ \phi(m + n) = \phi(m) + \phi(n) \quad \text{and} \quad \phi(sm) = s\phi(m); \]
the definitions of homomorphisms of right $S$-modules and of $(S, T)$-bimodules for $T$ a semiring are analogous.
The set of all $S$-module homomorphisms from $M$ to $N$ is denoted $\Hom(M, N)$.
Note that this set naturally comes equipped with pointwise addition; later we shall see how it can be endowed with module structure when $M$ and $N$ have additional structure.
When $M = N$ we write $\End(M)$ for $\Hom(M,M)$ and speak of endomorphisms of the semiring module $M$.
The category of left $S$-modules and homomorphisms\footnote{With functional composition as the composition operation.} is denoted $\LMod{S}$, the category of right $S$-modules and homomorphisms is denoted $\RMod{S}$, and the category of $(S,T)$-bimodules and homomorphisms is denoted $\BiMod{S}{T}$.
In all these categories isomorphism has its usual, categorical meaning (for which see e.g.~\cite{maclane-working}).
\end{dfn}

\begin{dfn}
Let $K$ be a Kleene algebra.
A \emph{left Kleene module} over $K$ is a semiring module $M$ over the semiring reduct of $K$ satisfying the following implication for $m \in M$ and $a \in K$:
\[ a m \le m \Rightarrow a^* m \le m. \]
A \emph{right Kleene module} over $K$ is similarly a right semiring module $M$ satisfying $ma \le m \Rightarrow ma^* \le m$ for $m \in M$ and $a \in K$.
A \emph{Kleene $(K, L)$-bimodule} is a semiring $(K, L)$-bimodule which is simultaneously a left Kleene module over $K$ and a right Kleene module over $L$.
\end{dfn}

Kleene modules are a quasivariety (see e.g.~\cite{kozen-ka-cl-semirings}, which gives a universal Horn axiomatization),
and so by universal algebra there are free Kleene $K$-modules over arbitrary sets of generators for $K$ a fixed Kleene algebra (see for example~\cite{bergman-coproducts-varieties}).

\subsection{Examples}

\textbf{Algebras.} 
Any Kleene algebra $K$ is a $(K, K)$-bimodule over itself, and a $(2, 2)$-bimodule over the Kleene algebra $2$ with two elements.
Moreover it is an $(A, A)$-bimodule over any subalgebra $A$ of $K$.
In particular, if $K$ is a Kleene algebra with tests~(as defined in~\cite{kozen-frederick-KAT}; a KAT), and $B$ is its distinguished Boolean subalgebra, then $B$ is a Kleene subalgebra of $K$ and $K$ is a $(B, B)$-bimodule.\\

\textbf{Ideals.}
Let $I$ be a left ideal of a Kleene algebra $K$.
Then $I$ is a left Kleene module with action given by multiplication.
To see this, note that $I$ is clearly a left module over an additively idempotent semiring  under the action of multiplication, and that for $m \in I$, if $a m \le m$ then by the $*$-axioms for Kleene algebras, $a^* m \le m$.
Similary, if $J$ is a right ideal of $K$ then $J$ is a right Kleene module, and if $J$ is a two-sided ideal then it is a Kleene $(K, K)$-bimodule.\\

\textbf{Quotients.}
Let $\equiv$ be a congruence
 of a Kleene algebra $K$, and set $Q = K / {\equiv}$.
If $Q$ is in fact a Kleene algebra, then $Q$ is a Kleene $K$-bimodule with action given by multiplication.
This follows from the fact that $\equiv$ is a congruence, and provides an example of a Kleene algebra construction that comes with a natural Kleene bimodule structure.\\

\textbf{Function modules.}
Let $K$ be a Kleene algebra, $B$ be a set, and consider the set $K^B$ of functions from $B$ to $K$ endowed with pointwise addition and zero as the constant-zero function.
Define left and right actions of $K$ on $K^B$ by $(am)(b) = a m(b)$ and $(ma)(b) = m(b) a$ for $a \in K$, $m \in K^B$.
It is immediate that these operations give a $(K, K)$-bimodule structure on $K^B$.
Moreover, if $B$ is finite, then $K^B$ has the usual finite basis given by characteristic functions for singletons in $B$, and hence is the free $(K, K)$-bimodule with generators $B$.
Forgetting either the right or the left action, it is also the free left and the free right $K$-module with generators $B$.
One can also restrict from $K^B$ to functions supported on a set in some lattice-ideal 
 for $\powSet(B)$. 
It is easy to see that the finitely-supported\footnote{So, supported in the ideal of finite sets.} elements of $K^B$ form the free $K$-module with generators $B$.\\

\textbf{Homomorphism and endomorphsim modules.}
Fix Kleene algebras $K$ and $S$ and let $M$ and $N$ be $(K, S)$-bimodules.
Denote by $H$ the set $\Hom(M, N)$, which we shall soon endow with additional structure.
For $\alpha, \beta \in \Hom(M, N)$, define $\alpha + \beta$ to be the pointwise sum of $\alpha$ and $\beta$ and $0_H$ to be the constant function with value zero.
This trivially gives $\Hom(M, N)$ the structure of a commutative monoid.
Now define a left action of $S$ by $(s \cdot_H \alpha)(m) = \alpha(m \cdot_M s)$ for $s \in S$, $m \in M$, and $\alpha \in \Hom(M, N)$; it is straightforward to check that this gives $\Hom(M, N)$ the structure of a left module over $S$.
It remains to check for $s \in S$, $m \in M$, and $\alpha \in \Hom(M, N)$ that if $s \alpha \le \alpha$ then $s^* \alpha \le \alpha$.
Assume that $s \alpha \le \alpha$, so for each $m \in M$, $s \alpha(m) \le \alpha(m)$.
Applying the action gives that $\alpha(m \cdot_M s)  = \alpha(m) \cdot_N s \le \alpha(m)$, so because $M$ is a right semiring module over $S$, $\alpha(m) \cdot_N s^* = \alpha(m \cdot_M s^*) \le \alpha(m)$,
which means $s^* \cdot_H \alpha(m) \le \alpha(m)$, that is, $s^* \cdot_H \alpha \le \alpha$.

Reversing order when appropriate handles the case of homomorphisms of into right modules.
Specifically, if $M$ and $N$ are $(S, K)$-bimodules, then $\Hom(M, N)$ becomes a right $K$-module when endowed with pointwise addition, the constant function zero, and the right action of $K$
defined by $(\alpha \cdot_H k)(m) = \alpha(k \cdot_M m)$ for $k \in K$, $m \in M$, and $\alpha \in \Hom(M, N)$.
If $M$ and $N$ are both $(K, S)$-bimodules, then $H = \Hom(M, N)$ with the left and right actions defined above in fact has the structure of a Kleene $(S, K)$-bimodule.
The only item in the definition of bimodule which needs to be verified is that the left and right actions commute; this is the content of the following computation for $\alpha \in \Hom(M, N)$, $m \in M$, $s \in S$, and $k \in K$:
\begin{align*}
[s\cdot_H (\alpha \cdot_H k)](m) &= (\alpha \cdot_H k)(m \cdot_M s) = \alpha(k \cdot_M (m \cdot_M s)) = \\
&= \alpha((k \cdot_M m) \cdot_M s) = (s \cdot_H \alpha)(k \cdot_M m) = [(s \cdot_H \alpha) \cdot_H k](m).
\end{align*}

Let $K$ be a Kleene algebra, $M$ a $(K, K)$-bimodule.
Define 
\[ \End(M) = \Hom(M, M), \]
the endomorphisms of $M$, and note that this has the structure of a $(K, K)$-bimodule.
In the special case of $K^n$ viewed as a $(\op{K}, \op{K})$-bimodule with pointwise scalar multiplication, $\op{\End(M)}$ is the $(K, K)$-bimodule of \emph{$n \times n$ matrices} over $K$.
The reason for taking the opposite is that the left and right actions of $K$ on $M$ are reversed in the definition of $\Hom(M, M) = \End(K)$.

Also, as in the classical theory of modules over rings, the \emph{dual} $M^\circ$ of a Kleene $(K, K)$-bimodule $M$ is the Kleene $(K,K)$-bimodule $\Hom(M, {_K}K_K)$.
Since the concepts of homomorphism module, endomorphism module, and the dual of a module will be used below, we recapitulate the definitions here.
\begin{dfn}
Let $K$, $S$ be Kleene algebras and $M$, $N$ Klene $(K, S)$-bimodules.
The \emph{homomorphsim module} is the set of homomorphisms from $M$ to $N$, endowed with the operations defined at the beginning of this example.
If $K = S$, the \emph{endomorphism module} of $M$ is $\End(M) = \Hom(M, M)$, and the \emph{dual} of $M$ is $M^\circ = \Hom(M, {_K}K_K)$.
\end{dfn}

\textbf{Propositional Dynamic Logic.}
Fix sets $P$ of action atoms and $T$ of test atoms, respectively.\footnote{For details concerning propositional dynamic logic see for instance~\cite{kozen-parikh-pdl}.}
Let $K$ be the free Kleene algebra generated by $P$, and $B$ the free Boolean algebra generated by $T$.
The Lindenbaum--Tarski algebra $A$ of propositional dynamic logic with action atoms $P$ and test atoms $T$ can be viewed as a $(B, B)$-bimodule via the subalgebra bimodule structure, but also has the structure
of a left $K$-module via the action $pa = [\langle p \rangle \phi]$ where $p \in K$, $a \in A$, $a = [\phi]$, and brackets denote the equivalence class of a formula in the Lindenbaum--Tarski algebra.
This follows immediately from the axioms for propositional dynamic logic.




\subsection{Tensor products of Kleene modules}

Tensor products of modules are defined so that the tensor product functor ${-} \otimes N$ is left-adjoint to the hom-functor $\Hom(N, {-})$ for each module $N$ of the correct type.
This is constructed by taking the quotient of formal sums of pairs from ${_A}M_B \times {_B}N_C$ by the least congruence imposing $A$-linearity in the first coordinate, $C$-linearity in the second coordinate, and balance,
which means that for any $m \in M$, $n \in N$, and $b \in B$, $(mb, n)$ is identified with $(m, bn)$.
But in the context of Morita equivalence only the isomorphism class of tensor products is relevant, so we define what it means to be a tensor product via a categorical adjunction requirement and then check that the definition
just outlined satisfies this requirement.

\begin{dfn}
Fix Kleene algebras $A$, $B$, $C$.
A \emph{tensor product} is a functor $F \mathrel{:} \BiMod{A}{B} \times \BiMod{B}{C} \rightarrow \BiMod{A}{C}$ such that for each $N \in \BiMod{B}{C}$,
the functors $M \mapsto F(M, N)$ and $P \mapsto \Hom(N, P)$ form the left and right parts of an adjunction, respectively.
\end{dfn}

Note that it follows from the universal property for adjoints that tensor products, if they exist, are unique up to isomorphism~\cite[IV.1]{maclane-working}.
One can also define tensor products producing left and right Kleene modules, with types $\BiMod{A}{B} \times \RMod{B} \rightarrow \RMod{A}$ and $\LMod{A} \times \BiMod{A}{B} \rightarrow \LMod{A}$, in the obvious way.
We omit the definition since it will not be used in the sequel.

Defining tensor products of Kleene modules via an adjunction requirement does not guarantee they exist; this is established next.

\begin{prop}
For any Kleene algebras $A$, $B$, $C$ there is a tensor product functor $F \mathrel{:} \BiMod{A}{B} \times \BiMod{B}{C} \rightarrow \BiMod{A}{C}$.
\end{prop}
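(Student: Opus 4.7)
The plan is to imitate the classical ring-theoretic construction of a tensor product of bimodules, using the fact that Kleene bimodules form a quasivariety to guarantee the necessary free algebras and quotients.

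First I would construct, for fixed bimodules $M \in \BiMod{A}{B}$ and $N \in \BiMod{B}{C}$, the free $(A,C)$-Kleene bimodule $F(M \times N)$ on the underlying set of $M \times N$; this exists by the quasivariety remark immediately following the definition of Kleene modules. Then I would form the quotient $M \otimes_B N \mathrel{:=} F(M \times N) / {\sim}$, where $\sim$ is the congruence generated by the usual balanced-bilinearity relations applied to the generators $(m,n)$: namely $(m + m', n) \sim (m, n) + (m', n)$ and $(m, n + n') \sim (m, n) + (m, n')$; $(am, n) \sim a \cdot (m,n)$ and $(m, nc) \sim (m,n) \cdot c$; the balancing $(mb, n) \sim (m, bn)$; and $(0, n), (m, 0) \sim 0$. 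Write $m \otimes n$ for the class of $(m,n)$. Because Kleene bimodules form a quasivariety, the quotient is automatically a Kleene $(A,C)$-bimodule, so no separate verification of the $*$-axiom on $M \otimes_B N$ is required.

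Next I would extract the universal property: by construction, $(A,C)$-bimodule homomorphisms $M \otimes_B N \to P$ correspond bijectively and naturally to maps $f \mathrel{:} M \times N \to P$ that are additive and compatible with $0$ in each argument, that satisfy $f(am, n) = a f(m,n)$ and $f(m, nc) = f(m,n) c$, and that are $B$-balanced in the sense $f(mb, n) = f(m, bn)$. Functoriality in both arguments is then immediate from the universal property, by pushing a pair of bimodule maps through the obvious balanced bilinear map into the target tensor product.

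Finally I would establish the adjunction. Given $\phi \mathrel{:} M \otimes_B N \to P$, define $\tilde\phi \mathrel{:} M \to \Hom(N, P)$ by $\tilde\phi(m)(n) = \phi(m \otimes n)$; conversely, given $\psi \mathrel{:} M \to \Hom(N, P)$, the map $(m,n) \mapsto \psi(m)(n)$ is balanced bilinear and factors through $M \otimes_B N$. The two assignments are mutually inverse and natural in $M$ and $P$, yielding the required adjunction $\Hom(M \otimes_B N, P) \cong \Hom(M, \Hom(N, P))$.

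I do not expect any single step to be genuinely hard; the real content is simply recognising that the quasivariety machinery supplies everything needed (existence of free bimodules and quotients by congruences), so that the ring-theoretic template goes through verbatim. The only place calling for mild care is checking the $B$-balancing relation interacts correctly with the two-sided action when passing to the quotient, and that the currying map $\psi \mapsto ((m,n) \mapsto \psi(m)(n))$ really does land in balanced bilinear maps—this follows because $\psi(mb)$ and $\psi(m) \cdot b$ agree as elements of $\Hom(N,P)$ by $A$-linearity of $\psi$ combined with the right $B$-action on the hom-bimodule defined pointwise via the left $B$-action on $N$.
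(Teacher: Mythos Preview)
Your proposal is correct and follows essentially the same route as the paper: build $M \otimes_B N$ as the quotient of the free $(A,C)$-bimodule on $M \times N$ by a bilinearity-and-balancing congruence, then establish the tensor--hom adjunction via the currying map $\varphi \mapsto (m \mapsto (n \mapsto \varphi(m \otimes n)))$ and its inverse. You are in fact slightly more careful than the paper in enumerating the congruence generators (the paper's list omits the relations $(am,n)\sim a\cdot(m,n)$ and $(m,nc)\sim (m,n)\cdot c$, which are needed for the curried map to be an $(A,B)$-bimodule homomorphism), but the overall structure of the argument is identical.
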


\begin{proof}
We proceed by direct construction.
Fix $M \in \BiMod{A}{B}$ and $N \in \BiMod{B}{C}$.
Let $\Xi$ be the Kleene $(A, C)$-bimodule freely generated by $M \times N$, and let $\Phi$ be the congruence generated by the relations
\[ (m + m', n) \mathrel{\Phi} (m, n) + (m', n), (m, n + n') \mathrel{\Phi} (m, n) + (m, n'), (mb, n) \mathrel{\Phi} (m, bn) \]
for $m, m' \in M$, $n, n' \in N$, and $b \in B$.
Define $M \otimes_B N = \Xi / \Phi$.
This immediately has the structure of a semiring because semirings are varieties and so any quotient by a semiring congruence is a semiring (see, for instance,~\cite{burris-sankappanavar}).
To see that this quotient in fact defines a Kleene module, we must check for $a \in A$, $c \in C$, $m \in M$, and $n \in N$ that
if $a (m, n) \le (m, n)$ then $a^* (m, n) \le (m, n)$, and that if $(m, n) c \le (m, n)$ then $(m, n) c^* \le (m, n)$.
These inequalities follow immediately from the fact that $M$ is a left module over $A$ and $N$ is a right module over $M$, together with the fact that $(m, n) + (m, n) = (m, n)$. 

It suffices to establish that the functors $M \mapsto M \otimes_B N$ and $P \mapsto \, \Hom(N, P)$ are adjoint.
This means that there is a natural isomorphism $\Hom(M \otimes_B N, P) \cong \Hom(M, \Hom(N, P))$ for $M \in \BiMod{A}{B}$ and $N, P \in \BiMod{B}{C}$.
For $\varphi \in \Hom(M \otimes_B N, P)$ define $\overline \varphi$ by $\overline \varphi(m)(n) = \varphi(m \otimes n)$.
This defines a homomorphism $M \rightarrow \Hom(N, P)$ by freeness of $\Xi$ and the construction of the congruence $\Phi$.
Let $M' \in \BiMod{A}{B}$, $P' \in \BiMod{B}{C}$, $\alpha \mathrel{:} M' \rightarrow M$, and $\beta \mathrel{:} P \rightarrow P'$.
To see that the map $\varphi \mapsto \overline \varphi$ is a natural transformation, we must verify that the following diagram commutes, where the horizontal
arrows represent application of the transformation $\varphi \mapsto \overline \varphi$.
\[
\begin{tikzcd}
\Hom(M \otimes_B N, P) \ar[r] \ar[d, "{\Hom(\alpha \otimes \id, \beta)}"]  & \Hom(M, \Hom(N, P)) \ar[d, "{\Hom(\alpha, \Hom(N, \beta))}"] \\
\Hom(M' \otimes_B N, P') \ar[r] & \Hom(M', \Hom(N, P')) \\
\end{tikzcd}
\]
Let $\varphi \in \Hom(M \otimes_B N, P)$.
Then for $m'\in M'$ and $n \in N$,
\begin{align*}
\overline{\Hom(\alpha \otimes \id, \beta)(\varphi)}(m')(n) &= (\beta \circ \varphi \circ (\alpha \otimes \id))(m' \otimes n) \\
&= \beta(\varphi((\alpha \otimes \id)(m' \otimes n))) \\
&= \beta(\varphi(\alpha(m') \otimes n)) \\
\end{align*}
and
\begin{align*}
\Hom(\alpha, \Hom(N, \beta))(\overline \varphi)(m')(n) &= \Hom(N, \beta)(\overline\varphi(\alpha(m'))(n)) \\
&= \Hom(N, \beta)(\varphi(\alpha(m') \otimes n)) \\
&= \beta(\varphi(\alpha(m') \otimes n)). \\
\end{align*}
Consequently $\varphi \mapsto \overline \varphi$ is a natural transformation.

It remains to find an inverse to the natural transformation $\varphi \mapsto \overline \varphi$.
The obvious candidate is given for $M \in \BiMod{A}{B}$, $P \in \BiMod{B}{C}$, $\psi \in \Hom(M, \Hom(N, P))$, $m \in M$, and $n \in N$ by $\hat \psi(m \otimes n) = \psi(m)(n)$.
It is immediate that the transformation $\psi \mapsto \hat \psi$ is inverse to the transformation $\phi \mapsto \overline \phi$.
It remains to verify that $\psi \mapsto \hat \psi$ is a natural transformation.
For this we need to check that the following diagram commutes, where $M, M' \in \BiMod{A}{B}$, $P, P' \in \BiMod{B}{C}$, $\alpha \mathrel{:} M \rightarrow M'$, $\beta \mathrel{:} P' \rightarrow P$,
and the horizontal arrows are given by application of the transformation $\psi \mapsto \hat \psi$.
\[
\begin{tikzcd}
\Hom(M', \Hom(N, P')) \ar[r] \ar[d, "{\Hom(\alpha, \Hom(N, \beta))}"] & \Hom(M' \otimes_B N, P') \ar[d, "{\Hom(\alpha \otimes \id, \beta)}"] \\
\Hom(M, \Hom(N, P)) \ar[r] & \Hom(M \otimes_B N, P) \\
\end{tikzcd}
\]
Let $\psi \in \Hom(M', \Hom(N, P'))$.
Then for $m \otimes n \in M \otimes_B N$,
\begin{align*}
\varwidehat{\Hom(\alpha, \Hom(N, \beta))}(\psi)(m \otimes n) &= \Hom(\alpha, \Hom(N, \beta))(\psi)(m)(n) \\
&= (\Hom(N, \beta) \circ \psi \circ \alpha)(m)(n) \\
&= \Hom(N, \beta)(\psi(\alpha(m)))(n) \\
&= (\beta \circ \psi(\alpha(m)))(n) \\
&= \beta(\psi(\alpha(m))(n))
\end{align*}
One more computation now verifies that this diagram commutes:
\begin{align*}
\Hom(\alpha \otimes \id, \beta)(\hat \psi)(m \otimes n) &= (\beta \circ \hat \psi \circ (\alpha \otimes \id))(m \otimes n) \\
&= \beta(\hat \psi(\alpha(m) \otimes n)) \\
&= \beta(\psi(\alpha(m))(n)).
\end{align*}
Therefore the transformation $\psi \mapsto \hat \psi$ is inverse to the natural transformation $\varphi \mapsto \overline \varphi$, so these transformations are natural isomorphisms.
\end{proof}

The next results will be needed to verify that the Morita skeleton category from which we define Morita equivalence is indeed a category.

\begin{prop}
Let $A$, $B$, $C$, $D$ be Kleene algebras, and ${_A}M_B$, ${_B}N_C$, ${_C}P_D$ be Kleene bimodules.
Then 
\[ ({_A}M_B \otimes {_B}N_C) \otimes {_C}P_D \cong {_A}M_B \otimes ({_B}N_C \otimes {_C}P_D), \]
and also ${_A}A_A \otimes {_A}M_B \cong {_A}M_B$ and ${_A}M_B \otimes {_B}B_B \cong {_A}M_B$.
\end{prop}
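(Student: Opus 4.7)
The cleanest route is to lean on the tensor--Hom adjunction established in the preceding proposition together with the Yoneda lemma, using direct constructions only where the adjunction does not immediately apply.

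For associativity, the plan is to show that the two candidate tensor products represent the same functor on $\BiMod{A}{D}$. Fix an arbitrary $Q \in \BiMod{A}{D}$ and iterate the adjunction $\Hom(X \otimes Y, Z) \cong \Hom(X, \Hom(Y, Z))$ twice. On the one hand,
\begin{align*}
\Hom((M \otimes_B N) \otimes_C P, Q) &\cong \Hom(M \otimes_B N, \Hom(P, Q)) \\
&\cong \Hom(M, \Hom(N, \Hom(P, Q))).
\end{align*}
On the other hand,
\begin{align*}
\Hom(M \otimes_B (N \otimes_C P), Q) &\cong \Hom(M, \Hom(N \otimes_C P, Q)) \\
&\cong \Hom(M, \Hom(N, \Hom(P, Q))).
\end{align*}
Composing these gives an isomorphism of the representing functors $\Hom(-, Q)$ which, by naturality of each instance of the adjunction (which was checked in the previous proposition), is itself natural in $Q$. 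The Yoneda lemma then delivers the desired natural isomorphism of bimodules.

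For the unit laws, the adjunction does not collapse in the bimodule setting quite so cleanly (restricting $\Hom_{\BiMod{A}{A}}(A, -)$ picks out central elements rather than the full module), so I would argue directly using the construction $K \otimes_K M = \Xi / \Phi$ from the proof of the previous proposition. Define $f \mathrel{:} A \otimes_A M \to M$ on generators by $a \otimes m \mapsto am$, and observe that $f$ respects each of the three defining relations of $\Phi$ (additivity in each argument is immediate, and the middle-linearity relation $(ab, m) \Phi (a, bm)$ becomes $(ab)m = a(bm)$, which holds since $M$ is an $A$-module). Thus $f$ descends to the quotient. Define $g \mathrel{:} M \to A \otimes_A M$ by $g(m) = 1 \otimes m$; it is a bimodule map on both sides because the left $A$-action on $A \otimes_A M$ agrees with multiplication in the first coordinate and the right $B$-action agrees with the right $B$-action on $M$. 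One checks $f \circ g = \id_M$ trivially, and $g \circ f = \id$ follows from the middle-linearity identity $1 \otimes am = (1 \cdot a) \otimes m = a \otimes m$. The case ${_A}M_B \otimes {_B}B_B \cong {_A}M_B$ is symmetric.

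The main obstacle is a bookkeeping one rather than a mathematical one: verifying that the associativity isomorphism obtained abstractly from Yoneda is indeed given on elementary tensors by $(m \otimes n) \otimes p \mapsto m \otimes (n \otimes p)$, which one would want for subsequent coherence considerations. This can be checked by tracing the unit of each adjunction through the composite isomorphism, but I would leave this verification implicit since uniqueness of adjoints already pins down the isomorphism up to a unique natural choice.
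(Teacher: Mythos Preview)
Your proposal is correct and follows essentially the same approach as the paper's proof: the associativity argument is identical---iterate the tensor--Hom adjunction to identify both triple tensors with the same representable functor $\Hom(M,\Hom(N,\Hom(P,-)))$ and conclude by Yoneda---and your unit-law argument uses the same map $m \mapsto 1 \otimes m$, though you spell out the inverse $a \otimes m \mapsto am$ and the verification more carefully than the paper does (the paper compresses this to the single observation $a \otimes m = a(1 \otimes m)$).
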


\begin{proof}
For the isomorphism $({_A}M_B \otimes {_B}N_C) \otimes {_C}P_D \cong {_A}M_B \otimes ({_B}N_C \otimes {_C}P_D)$ it suffices to check that the correspondence
\[ (m \otimes n) \otimes p \mapsto m \otimes (n \otimes p) \]
for $m \in {_A}M_B$, $n \in {_B}N_C$, and $p \in {_C}P_D$; is well-defined, for then it extends linearly to a homomorphism which clearly has as inverse the homomorphism obtained from the (similarly well-defined) correspondence
\[ m \otimes (n \otimes p) \mapsto (m \otimes n) \otimes p. \]
Suppose then that $m, m' \in M$, $n, n' \in N$, and $p, p' \in P$, and that $(m \otimes n) \otimes p = (m' \otimes n') \otimes p'$.
This is witnessed by a sequence of steps of applying additivity in each variable and commuting scalars across the tensor product, that is to say, applying the generators of the congruence used to define the tensor product.
Because scalars multiplying the middle variable may be pulled out of the inner tensor product (that is, $m'' \otimes n''c = (m'' \otimes n'')c$ and $bn'' \otimes p'' = b(n'' \otimes p'')$ and then moved across the outer tensor product,
the same sequence of operations can be applied to $m \otimes (n \otimes p)$ to obtain $m' \otimes (n' \otimes p')$.
This shows that the correspondence is well-defined and completes the proof of associativity of the tensor product up to isomorphism.
The isomorphsims ${_A}A_A \otimes {_A}M_B \cong {_A}M_B$ and ${_A}M_B \otimes {_B}B_B \cong {_A}M_B$ are obtained by linearly extending the maps $1_A \otimes m \mapsto m$ and $m \otimes 1_B \mapsto m$ to homomorphisms and noting that
these have respective inverses determined by $m \mapsto 1_A \otimes m$ and $m \mapsto 1_B \otimes B$.
\end{proof}


\section{The Morita Skeleton of Semirings and of Kleene Algebras}

The case of Morita equivalence for Kleene algebras is a straightforward adaptation of that for rings.
\begin{dfn}
The \emph{Morita skeleton of semirings} is a $1$-category with objects isomorphism classes of semirings,\footnote{Scott's trick~\cite{scott-trick} may be used to represent these equivalence classes as sets.}
 and for $A$ and $B$ semirings the morphisms between them are all isomorphism classes of semiring $(A, B)$-bimodules (so this category is not locally small).
The identity at an algebra $A$ is ${_A}A_A$, and composition of morphisms is given by the operation on isomorphism classes induced by tensor product (which is clearly well-defined because the tensor product is defined via an adjunction).
Semirings $A$ and $B$ are called \emph{Morita-equivalent} precisely when they are isomorphic in the Morita skeleton of semirings.

The \emph{Morita skeleton of Kleene algebras} is the subcategory of the Morita skeleton of semirings with objects isomorphisms of Kleene algebras and morphisms isomorphism classes of Kleene bimodules.
\end{dfn}

In the case of rings one easily obtains a bicategory structure by taking $2$-cells to be homomorphisms between modules~\cite{brouwer-bicategorical-morita},
and presumably this can be generalized to semirings and adapted to Kleene algebras, but since we do not use this extra structure we do not develop it.

The Morita skeleton (of semirings or of Kleene algebras) extends the usual category of semirings (Kleene algebras) and homomorphisms via the following notion of homomorphism module,
which realizes homomorphisms as particular bimodules.

\begin{dfn}
Let $A$, $B$ be semirings (Kleene algebras) and $h \mathrel{:} A \rightarrow B$ be a homomorphism.
The \emph{homomorphism module} $E_h$ is a semiring (Kleene) $(A, B)$-bimodule whose elements and addition are given by the additive reduct of $B$, with the following multiplication actions on $b \in B$:
For $a \in A$, $a \cdot_{E_h} b = h(a) \cdot_B b$; and for $b' \in B$, $b \cdot_{E_h} b' = b \cdot_B b'$. 
\end{dfn}

The next proposition verifies that $E_h$ has the structure of a semiring (Kleene) bimodule.

\begin{prop}
Let $A$ and $B$ be semirings (Kleene algebras) and $h : A \rightarrow B$ be a homomorphism.
Then $E_h$ satisfies the axioms of a semiring (Kleene) $(A, B)$-bimodule.
\end{prop}

\begin{proof}
Clearly $E_h$ has the structure of a commutative idempotent monoid.
The axioms for the right action of $B$ are immediate, and the axioms for the left action of $A$ follow because $h$ is a homomorphism.
The fact that the two actions commute is an instance of associativity in $B$.
In case $A$ and $B$ are Kleene algebras, the asterate inequality for the right action follows from the fact that $B$ is a Kleene algebra, while the asterate inequality on the left is verified by the following computation,
where $a \in A$ and $b \in B$.
If $a \cdot_{E_h} b = h(a)b \le b$, then $h(a)^* b \le b$ because $B$ is a Kleene algebra, and hence $a^* \cdot_{E_h} b = h(a^*)b \le b$ because $h(a)^* = h(a^*)$ since $h$ is a homomorphism of Kleene algebras.
\end{proof}

Next we verify that the tensor product of homomorphism bimodules cor\-res\-ponds in the Morita skeleton (of semirings or of Kleene algebras) to the homomorphism module of the composition,
so that homomorphism modules do indeed give an extension of the category of semirings (Kleene algebras) and homomorphisms to the Morita skeleton.

\begin{prop}
Let $A$, $B$, $C$ be semirings (Kleene algebras) and $f \mathrel{:} A \rightarrow B$ and $g \mathrel{:} B \rightarrow C$ be homomorphisms.
Then $E_{g \circ f} \cong E_f \otimes E_g$.
\end{prop}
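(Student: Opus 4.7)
The plan is to construct the natural candidate isomorphism $\phi \mathrel{:} E_f \otimes_B E_g \rightarrow E_{g \circ f}$ by sending a simple tensor $b \otimes c$ (with $b \in E_f = B$ and $c \in E_g = C$) to the product $g(b)c$ computed in $C$, and to exhibit the explicit inverse $\psi \mathrel{:} c \mapsto 1 \otimes c$, where $1$ denotes the multiplicative unit of $B$ regarded as an element of $E_f$. This candidate is forced by the unitality requirements on any bimodule isomorphism between these objects.

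First I would establish well-definedness of $\phi$. Using freeness of the Kleene module $\Xi$ on $E_f \times E_g$ from the construction of the tensor product, define $\widetilde\phi \mathrel{:} \Xi \rightarrow E_{g \circ f}$ on generators by $(b, c) \mapsto g(b) c$, and then check that $\widetilde\phi$ identifies each of the three families of relations generating the congruence $\Phi$. Additivity in each argument reduces to distributivity in $C$ together with $g$ being a semiring homomorphism, while the balancing relation $(b b_1, c) \sim (b, b_1 c)$ is preserved because $g(b b_1) c = g(b) g(b_1) c$ and $g(b_1) c$ is exactly the left $B$-action $b_1 \cdot c$ on $E_g$. That the induced $\phi$ respects both actions is routine; for instance $\phi(a \cdot (b \otimes c)) = g(f(a) b) c = (g \circ f)(a) g(b) c = a \cdot \phi(b \otimes c)$.

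Next I would verify that $\psi$ is a bimodule homomorphism and is inverse to $\phi$. Compatibility with the right $C$-action and with addition are clear; the only nontrivial check is on the left $A$-action: $\psi(a \cdot c) = 1 \otimes g(f(a)) c = 1 \otimes (f(a) \cdot c) = f(a) \otimes c = a \cdot (1 \otimes c) = a \cdot \psi(c)$, invoking the balancing relation in the middle equality and the $A$-action on $E_f$ at the end. Then $\phi \circ \psi(c) = g(1) c = c$ and $\psi \circ \phi(b \otimes c) = 1 \otimes g(b) c = 1 \otimes (b \cdot c) = b \otimes c$, so $\phi$ and $\psi$ are mutually inverse.

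The only step requiring care is ensuring that $\widetilde\phi$ descends as a \emph{Kleene}-module map rather than merely as an idempotent-semiring-module map, so that no hidden $*$-axiom obligations spoil the construction. But since $\widetilde\phi$ is built by freeness as a Kleene-module homomorphism into the genuine Kleene bimodule $E_{g \circ f}$, these obligations are discharged automatically, and I do not anticipate a harder step.
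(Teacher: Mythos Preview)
Your proposal is correct and follows the paper's argument essentially verbatim, with your $\phi$ and $\psi$ playing the roles of the paper's $\psi$ and $\phi$ respectively; the paper writes $bc$ where you write $g(b)c$, which is the same thing once one unwinds the left $B$-action on $E_g$. If anything you are more explicit than the paper about well-definedness of the map out of the tensor product and about why no extra $*$-axiom obligations arise.
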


\begin{proof}
We shall show that the map $\phi \mathrel{:} E_{g \circ f} \rightarrow E_f \otimes E_g$ given by $\phi(c) = 1 \otimes c$ is an isomorphism.
To see that this is a homomorphism, additivity and the commutativity with the right action of $C$ are immediate.
For the left action of $A$, note that for $a \in A$,
\begin{align*}
\phi(a \cdot_{E_{g \circ f}} c) &= 1 \otimes (g \circ f)(a) \cdot_C c \\
&= f(a) \otimes c \\
&= a(1 \otimes c) = a\phi(c).
\end{align*}
Now consider $b \otimes c \in E_g \otimes E_f$.
Define $\psi \mathrel{:} E_f \otimes E_g \rightarrow E_{g \circ f}$ by $\psi(b \otimes c) = b \cdot_{E_g} c$, and note that this is well-defined because $b \otimes c = 1 \otimes b \cdot_{E_g} c$ for $b \in B$, $c \in C$.
It is clearly additive and commutes with the right action of $C$, so we demonstrate that it commutes with the left action of $A$.
Let $a \in A$ and note that
\begin{align*}
\psi(a(b \otimes c)) &= \psi(f(a) (b \otimes c)) \\
&= \psi(f(a) \otimes b \cdot_{E_g} c) \\
&= f(a)(b \cdot_{E_g} c) \\
&= a \psi(b \otimes c).
\end{align*}
It remains to compute
\[ \phi(\psi(b \otimes c)) = \phi(b \cdot_{E_g} c) = 1 \otimes (b \cdot_{E_g} c) = b \otimes c \]
and
\[ \psi(\phi(c)) = \psi(1 \otimes c) = 1 \cdot_{E_g} c = c \]
to see that $\phi$ and $\psi$ are inverse bimodule homomorphisms, hence isomorphisms.
Therefore $E_{g \circ f} \cong E_f \otimes E_g$.
\end{proof}

Another way of expressing Morita equivalence for rings is as equivalence of the corresponding module categories.
This carries over to Kleene algebras, as we now argue.
If $K$ is Morita-equivalent with $S$ via bimodules ${_K}M_S$ and ${_S}N_K$, then for any left $K$-module ${_K}P$,
\[ M \otimes (N \otimes P) \cong (M \otimes N) \otimes P \cong K \otimes P \cong P, \]
and for any left $S$-module ${_S}Q$,
\[ N \otimes (M \otimes Q) \cong (N \otimes M) \otimes Q \cong S \otimes Q \cong Q, \]
so the functors $P \mapsto N \otimes P$ from $\LMod{K}$ to $\LMod{S}$, and $Q \mapsto N \otimes Q$ from $\LMod{S}$ to $\LMod{K}$, constitute an equivalence of categories.
The right-module categories $\RMod{K}$ and $\RMod{S}$ are equivalent by an exactly symmetric argument, and combining these arguments together gives that the bimodule categories
$\BiMod{K}{S}$ and $\BiMod{S}{K}$ are equivalent.

As in the classical case, a Kleene algebra is Morita-equivalent with each of its finite-dimensional matrix rings.
Because deterministic finite automata relative to a Kleene algebra $K$ correspond to square matrices over $K$, together with vectors of start and end states~\cite[Definition 12]{kozen-completeness},
this shows that reducts of automata in one Kleene algebra (forgetting the start and end states) correspond to single elements in a Morita-equivalent Kleene algebra.
Unfortunately the author has not yet satisfactorily investigated concrete results which can be proved using this correspondence.

\begin{prop}
Let $K$ be a Kleene algebra, $n$ a positive integer, and $M$ be the Kleene algebra of $n \times n$ matrices over $K$.
Then $K$ is Morita-equivalent with $M$.
\end{prop}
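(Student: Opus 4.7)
The plan is to follow the classical matrix-ring argument, adapted to Kleene algebrae. First I would exhibit the witnessing bimodules: let $V = K^n$ denote column vectors, forming a Kleene $(M, K)$-bimodule under left matrix multiplication and right scalar multiplication, and $W = K^n$ denote row vectors, forming a Kleene $(K, M)$-bimodule under left scalar multiplication and right matrix multiplication. To verify the Kleene module $\ast$-axiom for the left $M$-action on $V$, I would identify $V$ with the left ideal of $M$ consisting of matrices whose only nonzero column is the first, so that its Kleene module structure is inherited from the ideals example; the right $K$-action is coordinatewise, so its $\ast$-axiom follows coordinate by coordinate. The argument for $W$ is symmetric. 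Morita equivalence in the sense of the Morita category then reduces to constructing bimodule isomorphisms $W \otimes_M V \cong {_K}K_K$ and $V \otimes_K W \cong {_M}M_M$.

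For $W \otimes_M V \cong K$ I would define the forward map on simple tensors by $w \otimes v \mapsto \sum_i w_i v_i$, the inner product. Additivity in each factor is immediate, and the middle relation $(wm) \otimes v = w \otimes (mv)$ reduces to associativity of the product $wmv$ of a $1 \times n$ row, an $n \times n$ matrix, and an $n \times 1$ column. For the inverse I would send $k \mapsto (e_1^T \otimes e_1) k$; this inverts the forward map because every simple tensor collapses to a $K$-scalar multiple of $e_1^T \otimes e_1$, as one sees by writing $w = e_1^T W'$ with $W'$ the matrix having first row $w$ and zeros elsewhere and pushing $W'$ across the middle tensor argument, giving $w \otimes v = e_1^T \otimes W'v = (e_1^T \otimes e_1)(w \cdot v)$.

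For $V \otimes_K W \cong M$ I would define the forward map by the outer product $v \otimes w \mapsto vw$ (the matrix with $(i,j)$-entry $v_i w_j$), and the inverse by sending a matrix $A$ with rows $r_1, \ldots, r_n$ to $\sum_i e_i \otimes r_i$. These maps are mutually inverse by the elementary decomposition $A = \sum_{i,j} e_i a_{ij} e_j^T$. Bimodule compatibility for the inverse reduces to the two computations $\sum_i (me_i) \otimes r_i = \sum_{i,k} e_k \otimes m_{ki} r_i$, which matches the row decomposition of $mA$, and $\sum_i e_i \otimes r_i m$, which matches the row decomposition of $Am$.

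The main obstacle is not the algebra, which is routine once the right basis decompositions are chosen, but confirming that all the maps descend from the free modules on $W \times V$ or $V \times W$ to the tensor-product quotients---that is, respect the defining relations from the construction of the tensor product---and verifying the Kleene module $\ast$-axioms for $V$ and $W$ in the first place. Realizing $V$ and $W$ as one-sided ideals of $M$ disposes of the latter concern cheaply, and well-definedness on the quotient is checked by reading off each relation in the list defining $\Phi$ and confirming the forward and inverse maps send each pair of related elements to a common image.
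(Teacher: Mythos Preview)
Your argument is correct and follows the same overall strategy as the paper: both take the column-vector bimodule $K^n$ and its row-vector counterpart (the paper writes it as the dual $K^{n\circ}$) as the witnessing bimodules, and both establish the two tensor isomorphisms via the outer and inner products. The one substantive difference is in the proof that $W\otimes_M V\cong K$. The paper shows that all the diagonal tensors $e_i^\circ\otimes e_i$ coincide by sliding the all-ones matrix $\overline 1$ (and the auxiliary matrices $\overline e_i$) across the tensor, a manoeuvre that leans on additive idempotence of $K$; your single-row matrix $W'$ instead collapses an arbitrary simple tensor $w\otimes v$ directly to $(e_1^T\otimes e_1)(w\cdot v)$ without invoking idempotence, so it is a bit more elementary and in fact goes through verbatim over any semiring. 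You are also more careful than the paper about the Kleene $*$-axiom for $V$ and $W$, disposing of it via the left-ideal identification inside $M$; the paper leaves this point implicit.
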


\begin{proof}
The $(M, K)$-bimodule $K^n$ and its dual $K^{n\circ}$ will witness this equivalence.
First consider $K^n \otimes K^{n\circ}$.
For $1 \le i \le n$ let $e_i$ be the ``$i$th unit vector'' of $K^n$, that is the tuple which is zero everywhere except position $i$, where it is $1$.
Note that the set $\{ e_i \mathrel{:} 1 \le i \le n\}$ freely generates $K^n$.
Let $e_i^\circ$ be the dual of this module element, that is, $e_i^\circ(e_i) = 1$ and $e_i^\circ(e_j) = 0$ for $j \ne i$.
It is easy to show that the set $\{ e_i^\circ \mathrel{:} 1 \le i \le n \}$ freely generates $K^{n\circ}$.
For $1 \le i, j \le n$, let $E_{ij} \in M$ be the matrix which is zero everywhere except at position $(i, j)$, where it is $1$.
Note that these freely generate $M$, and that the set $\{ e_i \otimes e_j^\circ \mathrel{:} 1 \le i, j \le n \}$ freely generates $K^n \otimes K^{n\circ}$, because only elements of $K$ can be transported across the tensor product.
Define a map $\phi \mathrel{:} K^n \otimes K^{n\circ} \rightarrow M$ by $\phi(e_i \otimes e_j^\circ) = E_{ij}$; we show that $\phi$ is an isomorphism.
Its inverse, naturally, will be given by $\psi(E_{ij}) = e_i \otimes e_j^\circ$.
By free generation $\phi$ and $\psi$ are well-defined homomorphisms, and it is immediate that they are inverses.
Consequently $K^n \otimes K^{n\circ} \cong M$, which is half of a Morita equivalence.

For the other half we look at $K^{n\circ} \otimes K^n$, and our goal is to show this is isomorphic to $K$.
For $e_i^\circ \otimes e_j \in K^{n\circ} \otimes K^n$, define $\alpha(e_i^\circ \otimes e_j)$ to give the $ij$-entry of the identity matrix over $K$; that is, $0$ if $i \ne j$ and $1$ if $i = j$.
By free generation this determines a homomorphism $\alpha \mathrel{:} K^{n\circ} \otimes K^n \rightarrow K$.
Define also $\beta \mathrel{:} K \rightarrow K^{n\circ} \otimes K^n$ by $\beta(a) = a (e_1^\circ \otimes e_1)$ for $a \in A$.
To complete the proof we show that $\alpha$ and $\beta$ are inverses.
Certainly for $a \in A$ we have that
\[ \alpha(\beta(a)) = \alpha(a (e_1^\circ \otimes e_1)) = a \alpha(e_1^\circ \otimes e_1) = a1 = a. \]
To see that $\beta \circ \alpha = \id$ it suffices to show that for any $i \le n$, $e_i^\circ \otimes e_i = e_1^\circ \otimes e_1$ and for $j \le n$, $j \ne i$, $e_i^\circ \otimes e_j = 0$.
For the latter let $E_j$ be the matrix obtained by replacing the $j$th row of the zero matrix by the ``row vector'' $e_j^\circ$.
One computes easily that $E_j e_j = e_j$, and that $e_i^\circ E_j = 0$.
Hence
\[ e_i^\circ \otimes e_j = e_i^\circ \otimes E_j e_j = e_i^\circ E_j \otimes e_j = 0 \otimes e_j = 0. \]

For the case of $i = j$ we show that
\[ e_i^\circ \otimes e_i = \left(\sum_{i=1}^n e_i^\circ \right) \otimes \left(\sum_{i=1}^n e_i\right). \]

Before writing this out in generality, an example might be instructive.
We compute in the case $n = 3$ as follows:
\begin{align*}
\begin{pmatrix} 1 & 0 & 0 \end{pmatrix} \otimes \begin{pmatrix} 1 \\ 0 \\ 0 \end{pmatrix} &= \begin{pmatrix} 1 & 0 & 0 \end{pmatrix} \begin{pmatrix} 1 & 0 & 0 \\
                                                                                                                                                     1 & 1 & 1 \\
                                                                                                                                                     1 & 1 & 1 \end{pmatrix} \otimes \begin{pmatrix} 1 \\ 0 \\ 0 \end{pmatrix} 
= \begin{pmatrix} 1 & 0 & 0 \end{pmatrix} \otimes \begin{pmatrix} 1 & 0 & 0 \\
                                                                   1 & 1 & 1 \\
                                                                   1 & 1 & 1 \end{pmatrix} \begin{pmatrix} 1 \\ 0 \\ 0 \end{pmatrix} \\ 
&= \begin{pmatrix} 1 & 0 & 0 \end{pmatrix} \otimes \begin{pmatrix} 1 \\ 1 \\ 1 \end{pmatrix} 
= \begin{pmatrix} 1 & 0 & 0 \end{pmatrix} \otimes \begin{pmatrix} 1 & 1 & 1 \\
                                                                   1 & 1 & 1 \\
                                                                   1 & 1 & 1 \end{pmatrix} \begin{pmatrix} 1 \\ 1 \\ 1 \end{pmatrix} \\
&= \begin{pmatrix} 1 & 0 & 0 \end{pmatrix} \begin{pmatrix} 1 & 1 & 1 \\
                                                           1 & 1 & 1 \\
                                                           1 & 1 & 1 \end{pmatrix} \otimes \begin{pmatrix} 1 \\ 1 \\ 1 \end{pmatrix} 
= \begin{pmatrix} 1 & 1 & 1 \end{pmatrix} \otimes \begin{pmatrix} 1 \\ 1 \\ 1 \end{pmatrix}
\end{align*}
To compute this in generality, let $\overline 1$ be the $n \times n$ matrix with every entry $1$, and $\overline e_i$ be the result of replacing the $i$th row of $\overline 1$ with the entries of $e_i$.
It is straightforward to compute the following:
\begin{itemize}
\item $e_i^\circ \overline e_i = e_i^\circ$.
\item $\overline e_i e_i = \sum_{i=1}^n e_i$.
\item $\overline 1 \sum_{i=1}^n e_i = \sum_{i=1}^n e_i$.
\item $e_i^\circ \overline 1 = \sum_{i=1}^n e_i^\circ$.
\end{itemize}
Then we have
\begin{align*}
e_i^\circ \otimes e_i &= e_i^\circ \overline e_i \otimes e_i 
= e_i^\circ \otimes \overline e_i e_i 
= e_i^\circ \otimes \sum_{i=1}^n e_i \\
&= e_i^\circ \otimes \overline 1 \sum_{i=1}^n e_i
= e_i^\circ \overline 1 \otimes \sum_{i=1}^n e_i
= \left(\sum_{i=1}^n e_i^\circ\right) \otimes \sum_{i=1}^n e_i.
\end{align*}
This completes the proof.
\end{proof}

There is a well-known characterization of Morita equivalence for rings which extends to semirings~\cite{lam-lectures,katsov-nam-morita}.
To state it we need the following definition:
\begin{dfn}
Let $R$ be a semiring and $e \in R$.
The element $e$ is \emph{multiplicatively idempotent} precisely when $e^2 = e$, and \emph{full} if and only if $ReR = R$.
\end{dfn}

\begin{prop}[{\cite{katsov-nam-morita}}] \label{semiring-morita-full-idempotent}
Semirings $R$ and $S$ are Morita-equivalent if and only if one of the following occurs for some dimension $n$:
\begin{enumerate}
\item for some full idempotent $e$ in the semiring $M$ of $n \times n$ matrices over the semiring $R$, $S \cong eMe$, or else
\item for some full idempotent $e'$ in the semiring $M'$ of $n \times n$ matrices over the semiring $S$, $R \cong e'M'e'$.
\end{enumerate}
\end{prop}

Certainly Morita-equivalent Kleene algebras are Morita-equivalent as semirings, and so any Kleene algebra $S$ Morita-equivalent to the algebra $K$ is, up to isomorphism, of the form $eMe$ for some finite-dimensional matrix ring
$R$ over $K$ and some full idempotent $e \in M$.
It is natural to ask whether Kleene algebras whose semiring reducts are Morita-equivalent must be Morita-equivalent as Kleene algebras.
That is to say, if there is a semimodule Morita isomorphism between Kleene algebras $K$ and $S$, must there also be a Kleene algebra Morita isomorphism?

\begin{thm}[Morita rigidity of Kleene algebras over semirings] \label{KA-strong-morita}
Let $K$ and $S$ be a Kleene algebras.
If the semiring reducts of $K$ and $S$ are Morita-equivalent via semiring bimodules, then in fact $K$ and $S$ are Morita-equivalent via Kleene bimodules.
\end{thm}

The statement of this theorem is exactly what we mean when stating that the category of Kleene algebras is Morita-rigid with respect to the category of semirings.

\begin{proof}
Using proposition~\ref{semiring-morita-full-idempotent}, we may assume without loss of generality that for some dimension $n$,
$S$ is isomorphic to $eMe$ for some full idempotent $e$ of the Kleene algebra $M$ of matrices over $S$.
The asterate of $M$ is the standard matrix asterate for Kleene algebras; see~\cite{kozen-completeness} for the definition and basic properties.
Note that $K^{n\circ} e \otimes e K^n$ is a submodule of $K^{n\circ} \otimes K^n$, so because the idempotent $e$ is full, in fact $K^{n\circ} e \otimes e K^n \cong K$.
Conversely,
\[ eK^n \otimes K^{n \circ} e \cong e(K^n \otimes K^{n\circ})e \cong eMe \cong S. \]
Therefore the Kleene modules $K^{n\circ} e$ and $e K^n$ witness that $K$ and $S$ are Morita-equivalent.
\end{proof}

\section{Conclusion and further directions}

At the time of this writing Morita equivalence is a new theoretical tool being introduced into the study of Kleene algebras and related structures,
but it does have some intriguing potential for applications to the theory of programming language semantics.
One observation is that it is possible to freely change scalars from one Kleene algebra $K$ to a Morita-equivalent one $S$, because if bimodules ${_K}M_S$ and ${_S}N_K$ witness the Morita equivalence of $K$ and $S$,
then for any left module ${_K}P$ we can reversibly pass to the $S$-module ${_S}P = {_S}N_K \otimes {_K}P$, the reversibility being witnessed by the fact that
\[ {_K}M_S \otimes {_S}P \cong ({_K}M_S \otimes {_S}N_K) \otimes {_K}P \cong {_K}K_K \otimes {_K}P \cong {_K}P. \]
Changes of scalars for left modules and bimodules are handled similarly.

It is also of interest to investigate what additional structure can be added to a Kleene algebra (such as order-meets, to form a Kleene lattice, or order-residuals, to form an action algebra; see~\cite{kozen-action}) while preserving
Morita rigidity in the sense that Morita equivalence of semiring reducts implies Morita equivalence of the full algebras (via bimodules which in some appropriate sense preserve the additional structure).
This question was suggested to the author by Jose Meseguer at Logic and Applications 2025 in Dubrovník, Croatia.
The author has investigated this problem but removed results stated in an earlier version due to subtleties concerning homomorphism modules and tensor products pointed out by a diligent, anonymous referee.

This work is just the beginning of what could be a large research program.
One exciting direction is to develop homology and cohomology theory for Kleene algebras and investigate what implications the presence of nontrivial (co)homology classes have for computations represented in a given algebra.

\bibliographystyle{plain}
\bibliography{refs}
\end{document}